\newtheorem{theorem}{Theorem}[section]
\newtheorem{lemma}[theorem]{Lemma}
\theoremstyle{remark}
\newcommand{\be}{\begin{equation}}
\newcommand{\ee}{\end{equation}}
\newcommand{\bea}{\begin{eqnarray}}
\newcommand{\eea}{\end{eqnarray}}
\numberwithin{equation}{section}
\begin{document}

\title{Gap Probability Distribution of the Jacobi Unitary Ensemble: An Elementary Treatment, from Finite $n$ to Double Scaling}
\author{Chao Min\thanks{School of Mathematical Sciences, Huaqiao University, Quanzhou 362021, China; email: chaomin@hqu.edu.cn}\: and Yang Chen\thanks{Department of Mathematics, Faculty of Science and Technology, University of Macau, Macau, China;
email: yangbrookchen@yahoo.co.uk}}


\date{\today}
\maketitle
\begin{abstract}
In this paper, we study the gap probability problem of the (symmetric) Jacobi unitary ensemble of Hermitian random matrices, namely the probability that the interval $(-a,a)\:(0<a<1)$ is free of eigenvalues. Using the ladder operator technique for orthogonal polynomials and the associated supplementary conditions, we derive three quantities instrumental in the gap probability, denoted by $H_{n}(a)$, $R_{n}(a)$ and $r_{n}(a)$. We find that each one satisfies a second order differential equation. We show that after a double scaling, the large second order differential equation in the variable $a$ with $n$ as parameter satisfied by $H_{n}(a)$, can be reduced to the Jimbo-Miwa-Okamoto $\sigma$ form of the Painlev\'{e} V equation.
\end{abstract}

\section{Introduction}
In the work of Jimbo, Miwa, Mori and Sato \cite{Jimbo1980}, they studied the problem of the density matrix of an impenetrable Bose gas. They showed that the probability of observing a gap in the ground state can be related to a particular Painlev\'{e} V. The distribution of the largest eigenvalues of finite $n$ ensembles of random matrices generated by Gaussian, Laguerre and Jacobi weights for $\beta=1,2,4$ (here $\beta$ is the ``symmetry" parameter), has been systematically investigated by Adler and Van Moerbeke \cite{Adler}, via the vertex operators. Tracy and Widom \cite{Tracy} and Cao, Chen and Griffin \cite{Cao} studied the symmetric interval case of the gap probability problem for the Gaussian unitary ensemble, namely, the probability that the interval $J:=(-a,a)$ is free of eigenvalues.

In this paper, we consider the symmetric case for the Jacobi unitary ensemble generated by the weight $(1-x)^{\alpha}(1+x)^{\alpha}$.  The Jacobi unitary ensemble has various applications in particle physics, multivariate statistics and the symmetric spaces $\mathrm{O}^{+}(n)$ and $\mathrm{Sp}(n)/\mathrm{U}(n)$ \cite{Forrester2010}. This symmetric Jacobi unitary ensemble has also been studied by Witte, Forrester and Cosgrove \cite{Witte}, and Forrester \cite{Forrester2006}. Note that since the exponents in the weight are the same, this leads an even weight function, therefore we are in the similar situation as the Gaussian weight $\mathrm{e}^{-x^{2}}$. However,
 the derivation is much more complicated than the Gaussian case, since the support of $(1-x^{2})^{\alpha}$ is on $(-1,1)$. We mainly use the ladder operator approach to solve the problem, this theory has been widely applied to random matrix theory, the reader may refer to \cite{Basor2009, Basor2012, Cao, Chen1997L, Chen2004, Chen2012, Chen2010, Dai} for further information. For the Jacobi unitary ensemble and Painlev\'{e} VI equation, please refer to \cite{Chen2010, Haine, Dai}.

An elementary method to deal with the gap probability is to write it as a Hankel determinant, or a determinant of a moments matrix where the moments are generated by classical weight function, multiplied by one minus the characteristic function of an interval $J$ (see the formula (\ref{pan}) below). It is clear from such determinant representations that one is led to the study of the unconventional polynomials orthogonal with respect to the classical weight multiplied
by $1-\chi_{J}(x)$, that is $(1-x^{2})^{\alpha}\left(1-\chi_{(-a,a)}(x)\right)$. It is a well-known fact that Hankel determinants can be expressed as the product of the square of the $L^{2}$ norms of orthogonal polynomials. Based on the ladder operators adapted to these orthogonal polynomials, and from the associated supplementary conditions, a series difference and differential equations can be derived to ultimately give a description of the gap probability.

Let $P_{n}(x,a)$ be the monic orthogonal polynomials of degree $n$ with respect to the even weight function $w(x,a)$, namely,
$$
\int_{-1}^{1}P_{j}(x,a)P_{k}(x,a)w(x,a)dx=h_{j}(a)\delta_{jk},\;\;j,k=0,1,2,\ldots,
$$
where
$$
w(x,a):=w_{0}(x)\left(1-\chi_{(-a,a)}(x)\right),\;\;w_{0}(x):=\mathrm{e}^{-\mathrm{v}_{0}(x)}=(1-x^{2})^{\alpha},\;\;\alpha>0,
$$
and $\chi_{(-a,a)}(x)$ is the indicator function of the interval $(-a,a)$.
\\
Since $w(x,a)$ is even, $P_{n}(x,a)$ can be normalized as \cite{Chihara}
$$
P_{n}(x,a)=x^{n}+\mathrm{p}(n,a)x^{n-2}+\cdots+P_{n}(0,a).
$$
We shall see that $\mathrm{p}(n,a)$, the coefficient of $x^{n-2}$, will play an important role in the following discussions.

The three terms recurrence relation reduces to this special form since the weight is even \cite{Szego},
\be
xP_{n}(x,a)=P_{n+1}(x,a)+\beta_{n}(a)P_{n-1}(x,a)\label{re}
\ee
to be supplemented by the initial conditions
$$
P_{0}(x,a)=1,\;\;\beta_{0}(a)P_{-1}(x,a)=0.
$$
An easy computation shows that
$$
\beta_{n}(a)=\mathrm{p}(n,a)-\mathrm{p}(n+1,a)=\frac{h_{n}(a)}{h_{n-1}(a)}.
$$
Standard consideration \cite{Mehta} shows that the probability that $(-a,a)$ is free of eigenvalues in the Jacobi unitary ensemble is given by
\be\label{pan}
\mathbb{P}(a,n)=\frac{\det\left(\int_{-1}^{1}x^{i+j}w(x,a)dx\right)_{i,j=0}^{n-1}}{\det\left(\int_{-1}^{1}x^{i+j}w_{0}(x)dx\right)_{i,j=0}^{n-1}}
=\prod_{j=0}^{n-1}\frac{h_{j}(a)}{h_{j}(0)}.
\ee

The paper is organized as follows. In Sec. 2, we apply the ladder operators and associated supplementary conditions to the symmetric Jacobi weight, $(1-x^2)^{\alpha}$, on the set $J^{c}:=(-1,-a)\cup(a,1)$. In Sec. 3, we study the continuous evolution in $a$ and obtain the relations between $\beta_{n}(a)$ and the auxiliary quantity $r_{n}(a)$. In Sec. 4, we obtain the coupled Riccati equations satisfied by the auxiliary quantities $R_{n}(a)$ and $r_{n}(a)$. In Sec. 5, we establish the differential equation of $H_{n}(a)$, related to $\mathbb{P}(a,n)$, and finally reduced to the Jimbo-Miwa-Okamoto $\sigma$ form of the Painlev\'{e} V equation after a suitable double scaling, first obtained by Jimbo, Miwa, Mori and Sato \cite{Jimbo1980} in the impenetrable boson problem, where they studied the Fredholm determinant $\det\left(\mathbb{I}-K_{(-a,a)}\right)$ and $K_{(-a,a)}$ is the integral operator with kernel $K(x,y)=\frac{\sin(x-y)}{x-y}$ defined on the interval $(-a,a)$.

\section{Ladder Operators and Supplementary Conditions}
For convenience, we do not write down the $a$ dependence in $P_{n}(x)$, $w(x)$, $h_{n}$ and $\beta_{n}$ unless it is needed.
\begin{theorem}\label{th1}
The monic orthogonal polynomials with respect to the weight $w(x)$ on $[-1,1]$ satisfy the following differential recurrence relation
$$
P_{n}'(z)=\beta_{n}A_{n}(z)P_{n-1}(z)-B_{n}(z)P_{n}(z),
$$
where
$$
A_{n}(z)=\frac{a\:R_{n}(a)}{z^{2}-a^{2}}+\frac{\tilde{R}_{n}(a)}{z^{2}-1},
$$
$$
B_{n}(z)=\frac{z\:r_{n}(a)}{z^{2}-a^{2}}+\frac{z\:\tilde{r}_{n}(a)}{z^{2}-1},
$$
and
$$
R_{n}(a)=\frac{2P_{n}^{2}(a,a)(1-a^{2})^{\alpha}}{h_{n}(a)},
$$
$$
r_{n}(a)=\frac{2P_{n}(a,a)P_{n-1}(a,a)(1-a^{2})^{\alpha}}{h_{n-1}},
$$
$$
\tilde{R}_{n}(a)=\frac{2\alpha}{h_{n}(a)}\int_{-1}^{1}\frac{P_{n}^{2}(x,a)w(x,a)}{x^{2}-1}dx,
$$
$$
\tilde{r}_{n}(a)=\frac{2\alpha}{h_{n-1}}\int_{-1}^{1}\frac{xP_{n}(x,a)P_{n-1}(x,a)w(x,a)}{x^{2}-1}dx.
$$
\end{theorem}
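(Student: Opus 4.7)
The plan is to expand $P_n'(z)$ in the orthogonal polynomial basis, extract the coefficients by integration by parts, and then resum them using the Christoffel--Darboux formula. I would begin from
$$
P_n'(z) \;=\; \sum_{k=0}^{n-1} c_{n,k}\, P_k(z), \qquad c_{n,k} \;=\; \frac{1}{h_k}\int_{J^c} P_n'(x)\, P_k(x)\, w_0(x)\,dx,
$$
and integrate by parts on each of the two intervals $(-1,-a)$ and $(a,1)$. Since $w_0(\pm 1)=0$ when $\alpha>0$, the only surviving boundary values live at $\pm a$, and the $P_k'$ piece of the interior is killed by orthogonality; after using $P_n(-a)=(-1)^n P_n(a)$ and $\mathrm{v}_0'(x)=2\alpha x/(1-x^2)$ I would arrive at
$$
c_{n,k} \;=\; \frac{1}{h_k}\Bigl\{\bigl[(-1)^{n+k}-1\bigr] P_n(a) P_k(a)(1-a^2)^{\alpha} + \int_{J^c} P_n(x) P_k(x)\, \mathrm{v}_0'(x)\, w_0(x)\,dx\Bigr\}.
$$
This cleanly separates $c_{n,k}$ into a boundary piece localized at $\pm a$ and an interior piece of the classical Chen--Ismail type.

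For the boundary contribution, the factor $(-1)^{n+k}-1$ is exactly what allows me to apply the Christoffel--Darboux identity both at $y=a$ and at $y=-a$ (since $(-1)^k P_k(a) = P_k(-a)$) and add the two. After placing the result over the common denominator $z^2-a^2$, the numerator collapses to
$$
\frac{2a\, P_{n-1}(z) P_n(a) - 2z\, P_n(z) P_{n-1}(a)}{h_{n-1}(z^2-a^2)}.
$$
Multiplying by $P_n(a)(1-a^2)^{\alpha}$ and reading off the definitions of $R_n(a)$ and $r_n(a)$ (with $h_n=\beta_n h_{n-1}$) produces precisely the boundary parts $a R_n(a)/(z^2-a^2)$ in $A_n(z)$ and $z\,r_n(a)/(z^2-a^2)$ in $B_n(z)$.

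For the interior integral I would use the standard ladder trick: replace $\mathrm{v}_0'(x)$ by $\mathrm{v}_0'(x)-\mathrm{v}_0'(z)$, the added $\mathrm{v}_0'(z)$ piece being annihilated by orthogonality. The resulting kernel is rationalized via
$$
\frac{\mathrm{v}_0'(z)-\mathrm{v}_0'(x)}{z-x} \;=\; \frac{2\alpha(1+xz)}{(1-z^2)(1-x^2)}.
$$
Two parity observations then finish the task. Because $w(x,a)$ is even, the integrals $\int x\, P_n^2\, w/(1-x^2)\,dx$ and $\int P_n P_{n-1}\, w/(1-x^2)\,dx$ vanish, so exactly one of the two terms of $1+xz$ survives in each of $A_n$ and $B_n$; these remnants are, by definition, $\tilde R_n(a)/(z^2-1)$ and $z\,\tilde r_n(a)/(z^2-1)$, matching the claim.

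The main technical obstacle is the discontinuity of $\chi_{(-a,a)}$ at $\pm a$, which prevents a naive application of the classical smooth-weight ladder formalism. The argument works cleanly only because (i) the integration by parts produces explicit boundary data at $\pm a$ which, via reflection symmetry $P_k(-a)=(-1)^k P_k(a)$, fold together into a single $z^2-a^2$ denominator; and (ii) the evenness of $w(x,a)$ eliminates the spurious odd-parity integrals and leaves precisely the quantities $\tilde R_n$ and $\tilde r_n$ appearing in the theorem.
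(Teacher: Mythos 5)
Your proposal is correct and follows essentially the same route as the paper: expand $P_n'$ in the orthogonal basis, extract $c_{nk}$ by integration by parts, resum with Christoffel--Darboux, rationalize $\frac{\mathrm{v}_0'(z)-\mathrm{v}_0'(x)}{z-x}$, and invoke parity to isolate $\tilde R_n$ and $\tilde r_n$. The only (cosmetic) difference is that you obtain the boundary data at $\pm a$ by integrating by parts separately on $(-1,-a)$ and $(a,1)$, whereas the paper writes $\chi_{(-a,a)}'=\delta(x+a)-\delta(x-a)$ and reads off the same terms distributionally.
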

\noindent{\bf Remark.}
The quantities $\tilde{R}_n(a)$ and $\tilde{r}_n(a)$ are a new features of this problem, not seen in the case where $w_{0}(x)=\mathrm{e}^{-x^2}$. Ultimately, $\tilde{R}_{n}(a)$ and $\tilde{r}_{n}(a)$ may be expressed in terms of $R_{n}(a)$ and $r_{n}(a)$, respectively, see (\ref{aux1}) and (\ref{aux2}).
\begin{proof}
We start from
$$
P_{n}'(z)=\sum_{k=0}^{n-1}c_{nk}P_{k}(z),
$$
and orthogonality implies,
$$
c_{nk}=\frac{1}{h_{k}}\int_{-1}^{1}P_{n}'(x)P_{k}(x)w(x)dx.
$$
Through integration by parts, we have
\bea
c_{nk}
&=&-\frac{1}{h_{k}}\int_{-1}^{1}P_{n}(x)P_{k}(x)w'(x)dx\nonumber\\
&=&-\frac{1}{h_{k}}\int_{-1}^{1}P_{n}(x)P_{k}(x)w_{0}'(x)\left(1-\chi_{(-a,a)}(x)\right)dx+\frac{1}{h_{k}}\int_{-1}^{1}P_{n}(x)P_{k}(x)w_{0}(x)\chi_{(-a,a)}'(x)dx
\nonumber\\
&=&-\frac{1}{h_{k}}\int_{-1}^{1}P_{n}(x)P_{k}(x)\frac{w_{0}'(x)}{w_{0}(x)}w(x)dx+\frac{1}{h_{k}}\int_{-1}^{1}P_{n}(x)P_{k}(x)w_{0}(x)\left[\theta(x+a)-\theta(x-a)\right]'dx\nonumber\\
&=&\frac{1}{h_{k}}\int_{-1}^{1}P_{n}(x)P_{k}(x)\mathrm{v}_{0}'(x)w(x)dx+\frac{1}{h_{k}}\int_{-1}^{1}P_{n}(x)P_{k}(x)w_{0}(x)
\left[\delta(x+a)-\delta(x-a)\right]dx\nonumber\\
&=&\frac{1}{h_{k}}\int_{-1}^{1}P_{n}(x)P_{k}(x)\left[\mathrm{v}_{0}'(x)-\mathrm{v}_{0}'(z)\right]w(x)dx-\frac{1}{h_{k}}\left[P_{n}(a)P_{k}(a)(1-a^{2})^{\alpha}-P_{n}(-a)P_{k}(-a)(1-a^{2})^{\alpha}\right],\nonumber
\eea
where $\theta(x)$ is the Heaviside step function, i.e., $\theta(x)$ is 1 for $x>0$ and 0 otherwise.
\\
It follows that
\bea
P_{n}'(z)
&=&\int_{-1}^{1}\sum_{k=0}^{n-1}\frac{P_{k}(z)P_{k}(x)}{h_{k}}P_{n}(x)\left[\mathrm{v}_{0}'(x)-\mathrm{v}_{0}'(z)\right]w(x)dx\nonumber\\
&-&\sum_{k=0}^{n-1}\frac{P_{k}(z)P_{k}(a)}{h_{k}}P_{n}(a)(1-a^{2})^{\alpha}+\sum_{k=0}^{n-1}\frac{P_{k}(z)P_{k}(-a)}{h_{k}}P_{n}(-a)(1-a^{2})^{\alpha}.
\nonumber
\eea
Using the Christoffel-Darboux formula, namely,
$$
\sum_{k=0}^{n-1}\frac{P_k(x)P_k(y)}{h_k}=\frac{P_n(x)P_{n-1}(y)-P_n(y)P_{n-1}(x)}{h_{n-1}(x-y)},
$$
we find,
\bea
P_{n}'(z)
&=&-\frac{1}{h_{n-1}}\int_{-1}^{1}\frac{\mathrm{v}_{0}'(z)-\mathrm{v}_{0}'(x)}{z-x}\left[P_{n}(z)P_{n-1}(x)-P_{n}(x)P_{n-1}(z)\right]P_{n}(x)w(x)dx
\nonumber\\
&-&\frac{P_{n}(z)P_{n-1}(a)-P_{n}(a)P_{n-1}(z)}{h_{n-1}(z-a)}P_{n}(a)(1-a^{2})^{\alpha}\nonumber\\
&+&\frac{P_{n}(z)P_{n-1}(-a)-P_{n}(-a)P_{n-1}(z)}{h_{n-1}(z+a)}P_{n}(-a)(1-a^{2})^{\alpha}.\label{pnp}
\eea
Since the weight $w(x)$ is even, we have,
$$
P_{n}(-a,a)=(-1)^{n}P_{n}(a,a).
$$
Observe that
$$
\frac{\mathrm{v}_{0}'(z)-\mathrm{v}_{0}'(x)}{z-x}=\frac{2\alpha(1+xz)}{(x^{2}-1)(z^{2}-1)}.
$$
Then (\ref{pnp}) becomes
\bea
P_{n}'(z)
&=&\left[\frac{1}{h_{n-1}(z^{2}-1)}\int_{-1}^{1}\frac{2\alpha(1+xz)}{x^{2}-1}P_{n}^{2}(x)w(x)dx
+\frac{2aP_{n}^{2}(a)(1-a^{2})^{\alpha}}{h_{n-1}(z^{2}-a^{2})}\right]P_{n-1}(z)\nonumber\\
&-&\left[\frac{1}{h_{n-1}(z^{2}-1)}\int_{-1}^{1}\frac{2\alpha(1+xz)}{x^{2}-1}P_{n}(x)P_{n-1}(x)w(x)dx
+\frac{2zP_{n}(a)P_{n-1}(a)(1-a^{2})^{\alpha}}{h_{n-1}(z^{2}-a^{2})}\right]P_{n}(z).\nonumber
\eea
It is easy to see from the parity of the integrand,
$$
\int_{-1}^{1}\frac{2\alpha(1+xz)}{x^{2}-1}P_{n}^{2}(x)w(x)dx=2\alpha\int_{-1}^{1}\frac{P_{n}^{2}(x)w(x)}{x^{2}-1}dx
$$
and
$$
\int_{-1}^{1}\frac{2\alpha(1+xz)}{x^{2}-1}P_{n}(x)P_{n-1}(x)w(x)dx=2\alpha z\int_{-1}^{1}\frac{xP_{n}(x)P_{n-1}(x)w(x)}{x^{2}-1}dx.
$$
With the aid of the formula $\beta_{n}=\frac{h_{n}}{h_{n-1}}$, it follows that
$$
P_{n}'(z)=\beta_{n}A_{n}(z)P_{n-1}(z)-B_{n}(z)P_{n}(z).
$$
The proof is complete.
\end{proof}

\begin{lemma}
The functions $A_{n}(z)$ and $B_{n}(z)$ satisfy the following identities:
\be
B_{n+1}(z)+B_{n}(z)=z A_{n}(z)-\mathrm{v}_{0}'(z), \tag{$S_{1}$}
\ee
\be
1+z(B_{n+1}(z)-B_{n}(z))=\beta_{n+1}A_{n+1}(z)-\beta_{n}A_{n-1}(z), \tag{$S_{2}$}
\ee
\be
B_{n}^{2}(z)+\mathrm{v}_{0}'(z)B_{n}(z)+\sum_{j=0}^{n-1}A_{j}(z)=\beta_{n}A_{n}(z)A_{n-1}(z). \tag{$S_{2}'$}
\ee
\end{lemma}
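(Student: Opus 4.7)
The plan is to establish $(S_1)$ first by direct manipulation of the integral representations that arise inside the proof of Theorem~\ref{th1}, then to obtain $(S_2)$ by differentiating the three-term recurrence and substituting the lowering operator into itself, and finally to derive $(S_2')$ as a telescoping algebraic consequence of $(S_1)$ and $(S_2)$. Throughout I split $B_n(z) = K_n(z) + z\, r_n/(z^2-a^2)$ where $K_n(z) := \frac{1}{h_{n-1}}\int_{-1}^{1}\frac{\mathrm{v}_0'(z)-\mathrm{v}_0'(x)}{z-x} P_n(x) P_{n-1}(x) w(x)\,dx$ is the smooth ``bulk'' piece, and similarly split $\beta_n A_n(z)$ into its bulk integral and its $aR_n/(z^2-a^2)$ boundary contribution, as can be read off directly from the display preceding equation (2.1) in the proof of Theorem~\ref{th1}.

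For $(S_1)$, the key computation is $B_n(z) + B_{n+1}(z)$. The three-term recurrence $P_{n+1}(x) = x P_n(x) - \beta_n P_{n-1}(x)$ together with $\beta_n/h_n = 1/h_{n-1}$ makes the $P_n P_{n-1}$ cross terms in $K_n + K_{n+1}$ cancel, leaving $K_n + K_{n+1} = \frac{1}{h_n}\int \frac{\mathrm{v}_0'(z)-\mathrm{v}_0'(x)}{z-x}\, x\, P_n^2(x)\, w(x)\,dx$. Writing $x = z - (z-x)$ decomposes this into $z\,\tilde R_n/(z^2-1) - \mathrm{v}_0'(z) + \frac{1}{h_n}\int \mathrm{v}_0'(x) P_n^2 w\,dx$, and the last integral vanishes because $\int (P_n^2 w)'\,dx = 0$ yields $\int P_n^2 \mathrm{v}_0' w\,dx = 2\int P_n P_n' w\,dx = 0$ (the delta contributions at $x=\pm a$ cancel by evenness of $P_n^2$, and the surviving integral is zero by orthogonality). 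The boundary contributions combine via the recurrence evaluated at $x = a$, giving $r_n + r_{n+1} = a R_n$ and matching the $z/(z^2-a^2)$ pole on both sides.

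For $(S_2)$, differentiating $P_{n+1}(z) = z P_n(z) - \beta_n P_{n-1}(z)$ and replacing every derivative by the lowering operator of Theorem~\ref{th1} produces an identity involving $P_{n-2}(z)$, which is eliminated via $\beta_{n-1} P_{n-2} = z P_{n-1} - P_n$. Treating $P_n(z)$ and $P_{n-1}(z)$ as linearly independent over the field of rational functions in $z$, the coefficient of $P_n$ yields $(S_2)$ at once; the coefficient of $P_{n-1}$ recovers $B_{n+1} - B_{n-1} = z(A_n - A_{n-1})$, which is just the difference of two shifts of $(S_1)$ and serves as a consistency check.

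For $(S_2')$, I would multiply $(S_1)$ by $B_{n+1}(z) - B_n(z)$ and substitute $z(B_{n+1} - B_n) = \beta_{n+1} A_{n+1} - \beta_n A_{n-1} - 1$ from $(S_2)$. Rearranging shows $F_{n+1}(z) = F_n(z)$ for the quantity $F_n(z) := B_n^2(z) + \mathrm{v}_0'(z) B_n(z) + \sum_{j=0}^{n-1} A_j(z) - \beta_n A_n(z) A_{n-1}(z)$. The convention $P_{-1} \equiv 0$ forces $r_0 = \tilde r_0 = 0$, hence $B_0 \equiv 0$, and $\beta_0 A_{-1}$ is conventionally zero, so $F_0 \equiv 0$, giving $F_n \equiv 0$ by induction, which is $(S_2')$. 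The main obstacle throughout is the careful bookkeeping of the jump contributions from $\chi_{(-a,a)}$: one has to verify at each stage that the delta-function-induced poles at $z = \pm a$ in $A_n$ and $B_n$ fit exactly into the same algebraic identities familiar from the case of smooth classical weights, with the recurrence evaluated at the endpoint $x = a$ playing the decisive matching role.
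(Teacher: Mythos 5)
The paper does not actually prove this lemma: it states $(S_1)$, $(S_2)$, $(S_2')$ and defers to Magnus \cite{Magnus} and to the earlier ladder-operator papers \cite{Chen1997L, Chen2004}. Your proposal supplies the standard compatibility-condition proof from that literature, adapted to the jump weight, and in its essentials it is correct. The decomposition of $B_n$ into the bulk piece $K_n(z)=z\tilde r_n/(z^2-1)$ and the boundary piece $zr_n/(z^2-a^2)$ matches what the proof of Theorem \ref{th1} actually produces; the recurrence does combine the cross terms into $\frac{1}{h_n}\int\frac{\mathrm{v}_0'(z)-\mathrm{v}_0'(x)}{z-x}\,xP_n^2w\,dx$; the vanishing of $\frac{1}{h_n}\int \mathrm{v}_0'(x)P_n^2w\,dx$ (delta contributions cancelling by evenness, the rest by orthogonality) is exactly right; and the boundary poles match because $r_{n+1}+r_n=aR_n$ follows from the recurrence at $x=a$ — note this is the identity the paper later \emph{extracts} from $(S_1)$ as equation (\ref{s11}), so your derivation is consistent with, and logically prior to, the paper's use of it. The telescoping argument for $(S_2')$, with $F_0\equiv 0$ from $B_0\equiv 0$ and $\beta_0A_{-1}=0$, is the standard one and is correct.

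The one step that needs repair is the justification of $(S_2)$: ``treating $P_n(z)$ and $P_{n-1}(z)$ as linearly independent over the field of rational functions'' is not literally a valid statement, since any two nonzero elements of $\mathbb{C}(z)$ are linearly dependent over $\mathbb{C}(z)$ (e.g. $P_{n-1}\cdot P_n-P_n\cdot P_{n-1}=0$). The correct argument is: after the substitutions you describe, one obtains $f(z)P_n(z)+g(z)P_{n-1}(z)=0$ where $f$ and $g$ are rational, have poles only at $z=\pm a,\pm1$ (simple), and vanish at infinity (for $f$ this uses $zB_n(z)\to r_n+\tilde r_n=-n$, so the constant terms cancel). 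Clearing the denominator $(z^2-a^2)(z^2-1)$ gives polynomial coefficients of degree at most $3$; since consecutive orthogonal polynomials are coprime, $P_n$ would have to divide the coefficient of $P_{n-1}$, which is impossible for $n\ge 4$ unless both coefficients vanish identically, and the remaining small $n$ can be checked directly (or one proves $(S_2)$ by the same residue/integral manipulation used for $(S_1)$, which is how the paper itself operates when it equates residues at $a$ and $1$). With that justification inserted, your proof is complete and is the natural way to fill the gap the paper leaves by citation.
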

The three identities ($S_{1}$), ($S_{2}$) and the sum rule ($S_{2}'$) valid for $z\in \mathbb{C}\bigcup\{\infty\},$ a version of which, can be found in Magnus \cite{Magnus}, and were fundamental in the work appearing in \cite{Basor2009, Basor2012, Cao, Chen1997L, Chen2004, Chen2012, Chen2010, Dai}.

\noindent{\bf Remark.} $\mathrm{v}_{0}'(z)$ is the derivative of the ``smooth part'' of the weight appeared in ($S_{1}$) and ($S_{2}'$).

\begin{lemma}\label{le1}
\be\label{aux1}
\tilde{R}_{n}(a)=-\left(aR_{n}(a)+2n+2\alpha+1\right),
\ee
\be\label{aux2}
\tilde{r}_{n}(a)=-\left(r_{n}(a)+n\right).
\ee
\end{lemma}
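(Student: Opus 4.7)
The plan is to extract both identities from the asymptotic behavior of the supplementary conditions $(S_{2})$ and $(S_{2}')$ as $z\to\infty$. From the explicit rational forms of $A_{n}(z)$ and $B_{n}(z)$ given in Theorem~\ref{th1} one reads off
$$
A_{n}(z) = \frac{aR_{n}+\tilde R_{n}}{z^{2}} + O(z^{-4}), \qquad B_{n}(z) = \frac{r_{n}+\tilde r_{n}}{z} + O(z^{-3}),
$$
while $\mathrm{v}_{0}'(z)=2\alpha z/(1-z^{2}) = -2\alpha/z+O(z^{-3})$. The key observation is that, once these expansions are inserted, the functional identities $(S_{2})$ and $(S_{2}')$ collapse to scalar recurrences for the two combinations $r_{n}+\tilde r_{n}$ and $aR_{n}+\tilde R_{n}$.

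For (\ref{aux2}) I would work with $(S_{2})$. Since both $\beta_{n+1}A_{n+1}(z)$ and $\beta_{n}A_{n-1}(z)$ are $O(z^{-2})$, the order $z^{0}$ coefficient of the right-hand side vanishes, while the left-hand side contributes $1+(r_{n+1}+\tilde r_{n+1})-(r_{n}+\tilde r_{n})$. Hence
$$
(r_{n+1}+\tilde r_{n+1})-(r_{n}+\tilde r_{n}) = -1.
$$
The convention $P_{-1}\equiv 0$ forces $r_{0}=\tilde r_{0}=0$, so telescoping yields $r_{n}+\tilde r_{n}=-n$, i.e.\ $\tilde r_{n}(a)=-(r_{n}(a)+n)$.

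For (\ref{aux1}) I would use the sum rule $(S_{2}')$. The right-hand side $\beta_{n}A_{n}(z)A_{n-1}(z)$ is $O(z^{-4})$, so its $z^{-2}$ coefficient is zero. On the left, $B_{n}^{2}$ contributes $(r_{n}+\tilde r_{n})^{2}=n^{2}$; $\mathrm{v}_{0}'B_{n}$ contributes $-2\alpha(r_{n}+\tilde r_{n})=2\alpha n$ (here I have already used the identity just proved); and $\sum_{j=0}^{n-1}A_{j}(z)$ contributes $\sum_{j=0}^{n-1}(aR_{j}+\tilde R_{j})$. Matching at order $z^{-2}$ produces
$$
\sum_{j=0}^{n-1}(aR_{j}+\tilde R_{j}) = -n(n+2\alpha).
$$
Taking the first difference in $n$ gives $aR_{n}+\tilde R_{n}=-(2n+2\alpha+1)$, which is (\ref{aux1}).

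I do not foresee a genuine obstacle: the computation is pure bookkeeping once the expansions of $A_{n},B_{n},\mathrm{v}_{0}'$ at $z=\infty$ are in place and $(S_{2}')$ is applied after (\ref{aux2}) so that $r_{n}+\tilde r_{n}$ may be replaced by $-n$ at the outset. The only mild subtlety is the initial data $r_{0}=\tilde r_{0}=0$, without which the telescoping step underlying (\ref{aux2}) would be inconclusive; these follow immediately from the defining integrals and $P_{-1}\equiv 0$.
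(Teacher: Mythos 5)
Your proof is correct, but it follows a genuinely different route from the paper's. You extract both identities from the behaviour of the supplementary conditions at $z=\infty$: the $z^{0}$ coefficient of $(S_{2})$ gives the recurrence $(r_{n+1}+\tilde r_{n+1})-(r_{n}+\tilde r_{n})=-1$, which you telescope using $B_{0}\equiv 0$ (hence $r_{0}=\tilde r_{0}=0$), and the $z^{-2}$ coefficient of $(S_{2}')$ gives $\sum_{j=0}^{n-1}(aR_{j}+\tilde R_{j})=-n(n+2\alpha)$, which you difference in $n$. The paper instead proves (\ref{aux2}) by a direct computation on the defining integral of $\tilde r_{n}(a)$ --- integrating by parts and using $\int_{-1}^{1}P_{n}'P_{n-1}w\,dx=nh_{n-1}$ together with the delta functions coming from $\chi_{(-a,a)}'$ --- which requires no supplementary condition, no initial data and no telescoping; it then proves (\ref{aux1}) by equating residues of $(S_{1})$ at the finite simple poles $z=a$ and $z=1$, obtaining $r_{n+1}+r_{n}=aR_{n}$ and $\tilde R_{n}=\tilde r_{n+1}+\tilde r_{n}-2\alpha$ and substituting (\ref{aux2}). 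Since $A_{n}$, $B_{n}$ and $\mathrm{v}_{0}'$ are rational, the information you read off at infinity is complementary to what the paper reads off at the finite poles, so both bookkeeping exercises succeed; the paper's version has the practical advantages that (\ref{aux2}) comes out in closed form for each $n$ separately (no appeal to the $n=0$ convention), and that the intermediate relation $r_{n+1}+r_{n}=aR_{n}$, i.e.\ (\ref{s11}), is needed again later in the analysis, whereas your route does not produce it. Your one point of care --- the initial data $r_{0}=\tilde r_{0}=0$ --- is handled correctly: it follows from $\beta_{0}P_{-1}\equiv 0$ forcing $B_{0}\equiv 0$, and since $B_{0}$ has its two poles at distinct locations both residues must vanish separately.
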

\begin{proof}
We first establish the second equation above. From the definition of $\tilde{r}_{n}(a)$,
\bea
\tilde{r}_{n}(a)
&=&-\frac{2\alpha}{h_{n-1}}\int_{-1}^{1}x P_{n}(x)P_{n-1}(x)(1-x^{2})^{\alpha-1}\left(1-\chi_{(-a,a)}(x)\right)dx\nonumber\\
&=&\frac{1}{h_{n-1}}\int_{-1}^{1}P_{n}(x)P_{n-1}(x)\left(1-\chi_{(-a,a)}(x)\right)d(1-x^{2})^{\alpha}\nonumber\\
&=&-\frac{1}{h_{n-1}}\bigg[\int_{-1}^{1}P_{n}'(x)P_{n-1}(x)w(x)dx+\int_{-1}^{1}P_{n}(x)P_{n-1}'(x)w(x)dx\nonumber\\
&-&\int_{-1}^{1}P_{n}(x)P_{n-1}(x)(1-x^{2})^{\alpha}\chi_{(-a,a)}'(x)dx\bigg].\nonumber
\eea
Since
$$
P_{n}'(x)=nP_{n-1}(x)+\mathrm{a}\; \mathrm{polynomial}\; \mathrm{of}\; \mathrm{lower}\; \mathrm{degree},
$$
it follows that
$$
\int_{-1}^{1}P_{n}'(x)P_{n-1}(x)w(x)dx=nh_{n-1}.
$$
Hence,
\bea
\tilde{r}_{n}(a)
&=&-\frac{1}{h_{n-1}}\left[nh_{n-1}+2P_{n}(a)P_{n-1}(a)(1-a^{2})^{\alpha}\right]\nonumber\\
&=&-\left(r_{n}(a)+n\right).\label{trn}
\eea
Now we turn to prove the first equality. It turns out that it is better not to ``integrate by parts'' in $\tilde{R}_{n}(a)$. Instead, from ($S_{1}$), we obtain
$$
\frac{r_{n+1}(a)+r_{n}(a)-aR_{n}(a)}{z^{2}-a^{2}}=\frac{\tilde{R}_{n}(a)-\tilde{r}_{n+1}(a)-\tilde{r}_{n}(a)+2\alpha}{z^{2}-1}.
$$
Equating the residues at the simple pole $a$ and 1, we find
\be\label{s11}
r_{n+1}(a)+r_{n}(a)=aR_{n}(a)
\ee
and
\be\label{s12}
\tilde{R}_{n}(a)=\tilde{r}_{n+1}(a)+\tilde{r}_{n}(a)-2\alpha,
\ee
respectively.

Using (\ref{trn}) and (\ref{s11}), (\ref{s12}) becomes
\bea
\tilde{R}_{n}(a)&=&-r_{n+1}(a)-r_{n}(a)-2n-2\alpha-1\nonumber\\
&=&-\left(aR_{n}(a)+2n+2\alpha+1\right).\nonumber
\eea
This completes the proof.
\end{proof}
Using Theorem \ref{th1} and Lemma \ref{le1}, ($S_{2}$) becomes
\bea
&&\frac{z^{2}(r_{n+1}(a)-r_{n}(a))-a(\beta_{n+1}R_{n+1}(a)-\beta_{n}R_{n-1}(a))}{z^{2}-a^{2}}+1\nonumber\\
&=&\frac{z^{2}(r_{n+1}(a)-r_{n}(a)+1)-a(\beta_{n+1}R_{n+1}(a)-\beta_{n}R_{n-1}(a))-[(2n+2\alpha+3)\beta_{n+1}-(2n+2\alpha-1)\beta_{n}]}{z^{2}-1}.\nonumber
\eea
Similarly, equating the residues at the simple pole $a$ and 1, we obtain
\be\label{s21}
a(r_{n+1}(a)-r_{n}(a))=\beta_{n+1}R_{n+1}(a)-\beta_{n}R_{n-1}(a)
\ee
and
\be\label{s22}
r_{n+1}(a)-r_{n}(a)+1=a(\beta_{n+1}R_{n+1}(a)-\beta_{n}R_{n-1}(a))+(2n+2\alpha+3)\beta_{n+1}-(2n+2\alpha-1)\beta_{n},
\ee
respectively.

Substituting (\ref{s21}) to (\ref{s22}), we find,
$$
r_{n+1}(a)-r_{n}(a)+1=a^{2}(r_{n+1}(a)-r_{n}(a))+(2n+2\alpha+3)\beta_{n+1}-(2n+2\alpha-1)\beta_{n}.
$$
Hence,
$$
(a^{2}-1)(r_{n+1}(a)-r_{n}(a))+(2n+2\alpha+3)\beta_{n+1}-(2n+2\alpha-1)\beta_{n}=1.
$$
A telescopic sum gives
\bea
(a^{2}-1)r_{n}(a)
&=&-2\sum_{j=1}^{n-1}\beta_{j}-(2n+2\alpha+1)\beta_{n}+n\nonumber\\
&=&2\mathrm{p}(n,a)-(2n+2\alpha+1)\beta_{n}+n.\nonumber
\eea
Hence, we have,
\be
\mathrm{p}(n,a)=\frac{1}{2}\left[(a^{2}-1)r_{n}(a)+(2n+2\alpha+1)\beta_{n}-n\right].\label{pna}
\ee

From ($S_{2}'$), we have
\bea
&&\frac{z^{2}r_{n}^{2}(a)}{(z^{2}-a^{2})^{2}}-\frac{2z^{2}[r_{n}^{2}(a)+(n+\alpha)r_{n}(a)]}{(z^{2}-a^{2})(z^{2}-1)}
+\frac{z^{2}\left[r_{n}^{2}(a)+2(n+\alpha)r_{n}(a)+n^{2}+2n\alpha\right]}{(z^{2}-1)^{2}}\nonumber\\
&+&\frac{a\sum_{j=0}^{n-1}R_{j}(a)}{z^{2}-a^{2}}
-\frac{a\sum_{j=0}^{n-1}R_{j}(a)+n^{2}+2n\alpha}{z^{2}-1}\nonumber\\
&=&\beta_{n}\bigg[\frac{a^{2}R_{n}(a)R_{n-1}(a)}{(z^{2}-a^{2})^{2}}-\frac{2a^{2}R_{n}(a)R_{n-1}(a)+(2n+2\alpha+1)aR_{n-1}(a)
+(2n+2\alpha-1)aR_{n}(a)}{(z^{2}-a^{2})(z^{2}-1)}\nonumber\\
&+&\frac{(aR_{n}(a)+2n+2\alpha+1)(aR_{n-1}(a)+2n+2\alpha-1)}{(z^{2}-1)^{2}}\bigg].\label{s2p}
\eea
Multiplying both sides of (\ref{s2p}) by $(z^2-a^2)^2$ and letting $z\rightarrow a$, we obtain
\be\label{rn}
r_{n}^{2}(a)=\beta_{n}R_{n}(a)R_{n-1}(a)
\ee
Similarly, multiplying both sides of (\ref{s2p}) by $(z^2-1)^2$ and letting $z\rightarrow 1$, we find
\be\label{rn1}
r_{n}^{2}(a)+2(n+\alpha)r_{n}(a)+n^{2}+2n\alpha=\beta_{n}(aR_{n}(a)+2n+2\alpha+1)(aR_{n-1}(a)+2n+2\alpha-1).
\ee

It follows that
\bea
&&r_{n}^{2}(a)+2(n+\alpha)r_{n}(a)+n^{2}+2n\alpha\nonumber\\
&=&a^{2}\beta_{n}R_{n}(a)R_{n-1}(a)+(2n+2\alpha-1)a\beta_{n}R_{n}(a)+(2n+2\alpha+1)a\beta_{n}R_{n-1}(a)\nonumber\\
&+&(2n+2\alpha+1)(2n+2\alpha-1)\beta_{n}\nonumber\\
&=&a^{2}r_{n}^{2}(a)+(2n+2\alpha-1)a\beta_{n}R_{n}(a)+(2n+2\alpha+1)a\beta_{n}R_{n-1}(a)+(2n+2\alpha+1)(2n+2\alpha-1)\beta_{n},\nonumber
\eea
where we have made use of (\ref{rn}) in the last step. Then
\bea
&&(1-a^{2})r_{n}^{2}(a)+2(n+\alpha)r_{n}(a)+n^{2}+2n\alpha\nonumber\\
&=&(2n+2\alpha-1)a\beta_{n}R_{n}(a)+(2n+2\alpha+1)a\beta_{n}R_{n-1}(a)+(2n+2\alpha+1)(2n+2\alpha-1)\beta_{n}.\nonumber\\
\label{eq3}
\eea
With the aid of (\ref{rn}) and (\ref{rn1}), (\ref{s2p}) becomes
\bea
&&\frac{r_{n}^{2}(a)+a\sum_{j=0}^{n-1}R_{j}(a)}{z^{2}-a^{2}}-\frac{2z^{2}[r_{n}^{2}(a)+(n+\alpha)r_{n}(a)]}{(z^{2}-a^{2})(z^{2}-1)}
+\frac{r_{n}^{2}(a)+2(n+\alpha)r_{n}(a)-a\sum_{j=0}^{n-1}R_{j}(a)}{z^{2}-1}\nonumber\\
&=&-\frac{2a^{2}r_{n}^{2}(a)+(2n+2\alpha+1)a\beta_{n}R_{n-1}(a)+(2n+2\alpha-1)a\beta_{n}R_{n}(a)}{(z^{2}-a^{2})(z^{2}-1)}.\label{s2p1}
\eea
Equating the residues at the simple pole $a$ from (\ref{s2p1}), we find
\be
(1-a^{2})r_{n}^{2}(a)+2(n+\alpha)a^{2}r_{n}(a)+a(1-a^{2})\sum_{j=0}^{n-1}R_{j}(a)=(2n+2\alpha-1)a\beta_{n}R_{n}(a)+(2n+2\alpha+1)a\beta_{n}R_{n-1}(a).\label{eq4}
\ee
Subtract (\ref{eq4}) from (\ref{eq3}), we get
\be
a(a^{2}-1)\sum_{j=0}^{n-1}R_{j}(a)-2(n+\alpha)(a^{2}-1)r_{n}(a)-(2n+2\alpha+1)(2n+2\alpha-1)\beta_{n}+n^{2}+2n\alpha=0.\label{eq5}
\ee
\noindent{\bf Remark.}
We have the same result (\ref{eq4}) if we equate the residues at the simple pole 1 from (\ref{s2p1}).

\section{Evolution in $a$}
We start by taking the derivative with respect to $a$ in the following equation,
$$
\int_{-1}^{1}P_{n}^{2}(x)(1-x^{2})^{\alpha}\left(1-\chi_{(-a,a)}(x)\right)dx=h_{n}(a),\;\;n=0,1,2,\ldots,
$$
we obtain
$$
h_{n}'(a)=-2(1-a^{2})^{\alpha}P_{n}^{2}(a).
$$
It follows that
$$
\frac{d}{da}\ln h_{n}(a)=-R_{n}(a)
$$
and
$$
\frac{d}{da}\ln\beta_{n}(a)=\frac{d}{da}\ln h_{n}(a)-\frac{d}{da}\ln h_{n-1}(a)=R_{n-1}(a)-R_{n}(a).
$$
Hence, we have
$$
\beta_{n}'(a)=\beta_{n}R_{n-1}(a)-\beta_{n}R_{n}(a)
$$
or
\be
\beta_{n}R_{n-1}(a)=\beta_{n}'(a)+\beta_{n}R_{n}(a).\label{beta}
\ee
Taking a derivative with respect to $a$ in (\ref{pan}), we see that,
$$
\frac{d}{da}\ln\mathbb{P}(a,n)=\frac{d}{da}\ln\prod_{j=0}^{n-1}h_{j}(a)=-\sum_{j=0}^{n-1}R_{j}(a).
$$

On the other hand, taking the derivative with respect to $a$ in the equation
$$
\int_{-1}^{1}P_{n}(x)P_{n-2}(x)(1-x^{2})^{\alpha}\left(1-\chi_{(-a,a)}(x)\right)dx=0,\;\;n=0,1,2,\ldots
$$
produces
$$
\frac{d\mathrm{p}(n,a)}{da}=\frac{2(1-a^{2})^{\alpha}P_{n}(a)P_{n-2}(a)}{h_{n-2}}.
$$
According to the recurrence relation (\ref{re}), and note that $\beta_{n}=\frac{h_{n}}{h_{n-1}}$, we have
$$
\frac{P_{n-2}(a)}{h_{n-2}}=\frac{aP_{n-1}(a)}{h_{n-1}}-\frac{P_{n}(a)}{h_{n-1}}.
$$
It follows that
\be
\frac{d\mathrm{p}(n,a)}{da}=a r_{n}(a)-\beta_{n}R_{n}(a).\label{pna1}
\ee
\begin{lemma}
$\beta_{n}(a)$ and $r_{n}(a)$ satisfy the following differential equations:
\be\label{rnbn}
(2n+2\alpha+1)(2n+2\alpha-1)(\beta_{n}'(a))^{2}+4(n+\alpha)(a^{2}-1)\beta_{n}'(a)r_{n}'(a)+(a^{2}-1)^{2}(r_{n}'(a))^{2}=4\beta_{n}r_{n}^{2}(a).
\ee
\be\label{equ3}
(2n+2\alpha+1)(2n+2\alpha-1)(\beta_{n}-a\beta_{n}'(a))=(1-a^{2})r_{n}^{2}(a)+2(n+\alpha)r_{n}(a)-2(n+\alpha)a(1-a^{2})r_{n}'(a)+n^{2}+2n\alpha.
\ee
\end{lemma}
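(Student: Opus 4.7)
My plan is to eliminate the quantities $R_n(a)$, $R_{n-1}(a)$ and $\mathrm{p}(n,a)$ in favor of $\beta_n, r_n$ and their first $a$-derivatives, then substitute into the two algebraic identities (\ref{rn}) and (\ref{eq3}) that we already have. The payoff of this bookkeeping is that both $\beta_n R_n$ and $\beta_n R_{n-1}$ turn out to be linear combinations of $(1-a^2)r_n'$ and $\beta_n'$ with constant coefficients, after which (\ref{rnbn}) and (\ref{equ3}) are pure substitutions.

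First I would differentiate (\ref{pna}) with respect to $a$, which gives
\[
\frac{d\mathrm{p}(n,a)}{da}=\tfrac{1}{2}\bigl[2a\,r_n(a)+(a^2-1)r_n'(a)+(2n+2\alpha+1)\beta_n'(a)\bigr],
\]
and compare with (\ref{pna1}), $d\mathrm{p}(n,a)/da=a\,r_n(a)-\beta_n R_n(a)$. The $a\,r_n$ terms cancel and one obtains
\[
\beta_n R_n(a)=\tfrac{1}{2}(1-a^2)r_n'(a)-\tfrac{1}{2}(2n+2\alpha+1)\beta_n'(a).
\]
Then, using (\ref{beta}) in the form $\beta_n R_{n-1}(a)=\beta_n'(a)+\beta_n R_n(a)$, the constant $1$ in the square bracket shifts the coefficient $-(2n+2\alpha+1)/2$ to $-(2n+2\alpha-1)/2$, yielding
\[
\beta_n R_{n-1}(a)=\tfrac{1}{2}(1-a^2)r_n'(a)-\tfrac{1}{2}(2n+2\alpha-1)\beta_n'(a).
\]
These two compact formulas are the only calculation that needs care; everything else is linear algebra.

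To produce (\ref{rnbn}), I multiply (\ref{rn}) through by $\beta_n$ to rewrite it as $\beta_n r_n^2(a)=(\beta_n R_n)(\beta_n R_{n-1})$, and then plug in the two boxed expressions above. Expanding the product $[(1-a^2)r_n'-(2n+2\alpha+1)\beta_n'][(1-a^2)r_n'-(2n+2\alpha-1)\beta_n']$ gives the three terms on the left of (\ref{rnbn}) once I multiply by $4$ and use $(1-a^2)^2=(a^2-1)^2$ together with $-2(2n+2\alpha)=-4(n+\alpha)$ with a sign flip from $(1-a^2)\to(a^2-1)$.

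To produce (\ref{equ3}), I substitute $\beta_n R_n$ and $\beta_n R_{n-1}$ into the right-hand side of (\ref{eq3}). The sum $(2n+2\alpha-1)a\beta_n R_n+(2n+2\alpha+1)a\beta_n R_{n-1}$ collects a coefficient of $(1-a^2)r_n'$ equal to $\tfrac{a}{2}\bigl[(2n+2\alpha-1)+(2n+2\alpha+1)\bigr]=2(n+\alpha)a$, and a coefficient of $\beta_n'$ equal to $-\tfrac{a}{2}\bigl[(2n+2\alpha-1)(2n+2\alpha+1)+(2n+2\alpha+1)(2n+2\alpha-1)\bigr]=-(2n+2\alpha+1)(2n+2\alpha-1)a$. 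Combining with the trailing $(2n+2\alpha+1)(2n+2\alpha-1)\beta_n$ from (\ref{eq3}) and moving terms to the appropriate sides gives exactly (\ref{equ3}).

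The only real obstacle is the sign bookkeeping in identifying $\beta_n R_n$ and $\beta_n R_{n-1}$; once those two identities are in hand, both differential equations drop out mechanically. No further auxiliary identity beyond (\ref{rn}), (\ref{eq3}), (\ref{beta}), (\ref{pna}), and (\ref{pna1}) is needed.
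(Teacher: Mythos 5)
Your proposal is correct and follows essentially the same route as the paper: the key step in both is to differentiate (\ref{pna}) and compare with (\ref{pna1}) to obtain $\beta_nR_n(a)=\tfrac{1}{2}\left[(1-a^2)r_n'(a)-(2n+2\alpha+1)\beta_n'(a)\right]$, after which (\ref{rnbn}) follows from (\ref{rn}) together with (\ref{beta}), and (\ref{equ3}) follows by substitution into (\ref{eq3}). The only cosmetic difference is that you also write out $\beta_nR_{n-1}$ explicitly before substituting, whereas the paper eliminates $\beta_nR_{n-1}$ via (\ref{beta}) first; the algebra is identical.
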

\begin{proof}
We begin by substituting (\ref{pna}) into (\ref{pna1}), to find,
\be\label{rnp}
(1-a^{2})r_{n}'(a)=2\beta_{n}R_{n}(a)+(2n+2\alpha+1)\beta_{n}'(a).
\ee
Then
\be
\beta_{n}R_{n}(a)=\frac{1}{2}\left[(1-a^{2})r_{n}'(a)-(2n+2\alpha+1)\beta_{n}'(a)\right].\label{br}
\ee
Combining (\ref{rn}) with (\ref{beta}) gives,
\be
(\beta_{n}'(a)+\beta_{n}R_{n}(a))\beta_{n}R_{n}(a)=\beta_{n}r_{n}^{2}(a).\label{bn}
\ee
Replacing $\beta_{n}R_{n}(a)$ by (\ref{br}), (\ref{bn}) becomes,
$$
(2n+2\alpha+1)(2n+2\alpha-1)(\beta_{n}'(a))^{2}+4(n+\alpha)(a^{2}-1)\beta_{n}'(a)r_{n}'(a)+(a^{2}-1)^{2}(r_{n}'(a))^{2}=4\beta_{n}r_{n}^{2}(a).
$$

Now we come to prove the second equation. Substituting (\ref{beta}) into (\ref{eq3}) to eliminate $\beta_{n}R_{n-1}(a)$, we find
\bea
&&(4n+4\alpha)a\beta_{n}R_{n}(a)\nonumber\\
&=&(1-a^{2})r_{n}^{2}(a)+2(n+\alpha)r_{n}(a)+n^{2}+2n\alpha-(2n+2\alpha+1)(2n+2\alpha-1)\beta_{n}-(2n+2\alpha+1)a\beta_{n}'(a).\nonumber\\
\label{br2}
\eea
Replacing $\beta_{n}R_{n}(a)$ by (\ref{br}), (\ref{br2}) becomes,
$$
(2n+2\alpha+1)(2n+2\alpha-1)(\beta_{n}-a\beta_{n}'(a))=(1-a^{2})r_{n}^{2}(a)+2(n+\alpha)r_{n}(a)-2(n+\alpha)a(1-a^{2})r_{n}'(a)+n^{2}+2n\alpha.
$$
The lemma is established.
\end{proof}

\section{Coupled Riccati Equations Satisfied by $R_{n}(a)$ and $r_{n}(a)$}
In this section, we are able to obtain the differential equations on $R_{n}(a)$ and $r_{n}(a)$ respectively. From (\ref{rn}) and (\ref{beta}), we have
\be\label{bep}
\beta_{n}'(a)=\frac{r_{n}^2(a)}{R_{n}(a)}-\beta_{n}R_{n}(a).
\ee
Plugging (\ref{bep}) into (\ref{rnp}), we find,
\be\label{rnp1}
r_{n}'(a)=\frac{(2n+2\alpha+1)r_{n}^2(a)}{(1-a^2)R_{n}(a)}-\frac{2n+2\alpha-1}{1-a^2}\beta_{n}R_{n}(a).
\ee
Combining (\ref{eq3}) and (\ref{rn}), we have
\bea
&&(1-a^{2})r_{n}^{2}(a)+2(n+\alpha)r_{n}(a)+n^{2}+2n\alpha\nonumber\\
&=&(2n+2\alpha-1)a\beta_{n}R_{n}(a)+(2n+2\alpha+1)a \frac{r_{n}^2(a)}{R_{n}(a)}+(2n+2\alpha+1)(2n+2\alpha-1)\beta_{n}.\nonumber
\eea
It follows that
\be\label{beta1}
\beta_{n}(a)=\frac{[(1-a^{2})r_{n}^{2}(a)+2(n+\alpha)r_{n}(a)+n^{2}+2n\alpha]R_{n}(a)-(2n+2\alpha+1)a r_{n}^2(a)}{(2n+2\alpha-1)R_{n}(a)(a R_{n}(a)+2n+2\alpha+1)}.
\ee
Putting (\ref{beta1}) into (\ref{rnp1}), we find a Riccati equation on $r_{n}(a)$,
\bea\label{rnp2}
r_{n}'(a)&=&\frac{\left[(a^2-1)R_{n}^2(a)+2a(2n+2\alpha+1)R_{n}(a)+(2n+2\alpha+1)^2\right]r_{n}^2(a)}{(1-a^2)R_{n}(a)(aR_{n}(a)+2n+2\alpha+1)}\nonumber\\
&-&\frac{2(n+\alpha)R_{n}(a)r_{n}(a)+n(n+2\alpha)R_{n}(a)}{(1-a^2)(aR_{n}(a)+2n+2\alpha+1)}.
\eea
Taking (\ref{beta1}) into (\ref{bep}) to eliminate $\beta_{n}$ and $\beta_{n}'(a)$, and using (\ref{rnp2}) to replace $r_{n}'(a)$, we obtain a Riccati equation on $R_{n}(a)$,
\be\label{ri}
R_{n}'(a)=R_{n}^2(a)+\frac{2a^2(n+\alpha)-2(a^2-1)r_{n}(a)}{a(a^2-1)}R_{n}(a)-\frac{2(2n+2\alpha+1)}{a^2-1}r_{n}(a).
\ee
\noindent{\bf Remark.}
Similar coupled Riccati equations have appeared in the theory of Painlev\'{e} equations, see V. I. Gromak, I. Laine and S. Shimomura \cite{Gromak}. In partcular, see V. I. Gromak publications in \cite{Gromak}, page 287-288.

Eliminating $r_{n}(a)$ from (\ref{ri}) and (\ref{rnp2}), we obtain a second order differential equation satisfied by $R_{n}(a)$,
\newpage
\bea
&&2a(a^2-1)^2 R_{n}(a)(a R_{n}(a)+2n+2\alpha+1)\left[(a^2-1)R_{n}(a)+(2n+2\alpha+1)a\right]R_{n}''(a)\nonumber\\
&-&a(a^2-1)^2\left[3a(a^2-1)R_{n}^2(a)+2(2a^2-1)(2n+2\alpha+1)R_{n}(a)+(2n+2\alpha+1)^2a\right](R_{n}'(a))^2\nonumber\\
&+&2(a^2-1)(2n+2\alpha+1)R_{n}(a)\left[(2a^4-a^2+1)R_{n}(a)+2a^3(2n+2\alpha+1)\right]R_{n}'(a)-a^2(a^2-1)^3R_{n}^6(a)\nonumber\\
&-&2a(a^2-1)^2(2a^2-1)(2n+2\alpha+1)R_{n}^5(a)-(a^2-1)\Big[a^4(7+24\alpha+24n+24n^2+48n\alpha+24\alpha^2)\nonumber\\
&-&a^2(5+24n+24n^2+24\alpha+48n\alpha+20\alpha^2)+2+4\alpha+4n+4n^2+8n\alpha\Big]R_{n}^4(a)\nonumber\\
&-&4a(a^2-1)(2n+2\alpha+1)\left[a^2(2n+2\alpha+1)^2-2n-2n^2-2\alpha-4n\alpha\right]R_{n}^3(a)\nonumber\\
&-&4a^2(2n+2\alpha+1)^2\left[(n(n+1)+(2n+1)\alpha+\alpha^2)a^2-n(n+1)-(2n+1)\alpha\right]R_{n}^2(a)=0.\nonumber
\eea
Similarly, eliminating $R_{n}(a)$ from (\ref{rnp2}) and (\ref{ri}), we obtain a second order differential equation satisfied by $r_{n}(a)$,
\bea
&&a^2 (a^2-1)^4 (r_n''(a))^2+4 a^2 (a^2-1)^2  \left[a (a^2-1) r_n'(a)+4 r_n^3(a)+6 (\alpha +n) r_n^2(a)+2 n (2 \alpha +n) r_n(a)\right]r_n''(a)\nonumber\\
&-&4 (a^2-1)^2 \left[(a^2+1)^2 r_n^2(a)+2 (a^2+1) a^2 (\alpha +n) r_n(a)+a^4 (\alpha +n-1) (\alpha +n+1)\right] (r_n'(a))^2\nonumber\\
&+&16 a^3 (a^2-1) \left[2 r_n^3(a)+3 (\alpha +n) r_n^2(a)+n (2 \alpha +n) r_n(a)\right] r_n'(a)-16 (a^2-1)^2 r_n^6(a)\nonumber\\
&-&32 (2 a^4-3 a^2+1) (\alpha +n) r_n^5(a)-16 (a^2-1) \left[5 a^2 \alpha ^2+(6 a^2-1) (n^2+2\alpha  n)\right] r_n^4(a)\nonumber\\
&-&32 a^2 (\alpha +n) \left[a^2 \alpha ^2+(a^2-1) (2 n^2+4 \alpha  n)\right] r_n^3(a)-16 a^2 n (2 \alpha +n) \left[a^2 \alpha ^2+(a^2-1) (n^2+2 \alpha  n)\right] r_n^2(a)\nonumber\\
&=&0.\label{cha}
\eea
\noindent{\bf Remark.}
Equation (\ref{cha}) may be transformed to the fourth member of the Chazy II system \cite{Cosgrove, Chazy1909, Chazy1911}.

\section{$\sigma$ Form of the Painlev\'{e} V Equation}
We introduce the quantity $H_{n}(a)$, defined by
$$
H_{n}(a):=a(a^{2}-1)\frac{d}{da}\ln\mathbb{P}(a,n)=a(1-a^{2})\sum_{j=0}^{n-1}R_{j}(a).
$$
Then (\ref{eq5}) becomes,
$$
H_{n}(a)-2(n+\alpha)(1-a^{2})r_{n}(a)+(2n+2\alpha+1)(2n+2\alpha-1)\beta_{n}-n^{2}-2n\alpha=0.
$$
It follows that
\be
(2n+2\alpha+1)(2n+2\alpha-1)\beta_{n}=-H_{n}(a)+2(n+\alpha)(1-a^{2})r_{n}(a)+n^{2}+2n\alpha.\label{equ1}
\ee
Putting (\ref{equ1}) into (\ref{equ3}), eliminate $\beta_{n}$ and $\beta_{n}'(a)$, we obtain
\be\label{equ4}
a H_{n}'(a)-H_{n}(a)=(1-a^{2})r_{n}^{2}(a)-2(n+\alpha)a^{2}r_{n}(a).
\ee
Taking a derivative on both sides of (\ref{equ4}), we find
$$
r_{n}'(a)=\frac{a H_{n}''(a)+2a r_{n}^2(a)+4a(n+\alpha)r_{n}(a)}{2[(1-a^2)r_{n}(a)-a^2(n+\alpha)]}.
$$
Then
\bea
(r_{n}'(a))^2&=&\frac{[a H_{n}''(a)+2a r_{n}^2(a)+4a(n+\alpha)r_{n}(a)]^2}{4[(1-a^2)r_{n}(a)-a^2(n+\alpha)]^2}\nonumber\\
&=&\frac{[a H_{n}''(a)+2a r_{n}^2(a)+4a(n+\alpha)r_{n}(a)]^2}{4[(1-a^2)(a H_{n}'(a)-H_{n}(a))+(n+\alpha)^2a^4]},\label{rnp4}
\eea
where the second equality comes from (\ref{equ4}).

On the other hand, taking (\ref{equ1}) to (\ref{rnbn}), eliminate $\beta_{n}$ and $\beta_{n}'(a)$, we have
\bea
(a^2-1)^2 (r_{n}'(a))^2&=&8(a^2-1)(n+\alpha)r_{n}^3(a)+4[(4a^2-1)(n+\alpha)^2+\alpha^2+H_{n}(a)]r_{n}^2(a)\nonumber\\
&+&8a(n+\alpha)H_{n}'(a)r_{n}(a)+(H_{n}'(a))^2.\label{rnp5}
\eea
Substituting (\ref{rnp4}) into (\ref{rnp5}), and using (\ref{equ4}) to eliminate $r_{n}^2(a)$ and $r_{n}^3(a)$, to get an equation linear in $r_{n}(a)$. The solution is
\be\label{rna}
r_{n}(a)=
\frac{N}{D},
\ee
where $N$ and $D$ are exclusively expressed in terms of $H_n(a)$, $H_n'(a)$ and $H_n''(a)$,
\bea
N:
&=&a^2 (a^2-1)^3 (H_n''(a))^2+4 a^2 (a^2-1)^2\left[H_n(a)-a H_n'(a)\right] H_n''(a)+4 a (a^2-1)^2 (H_n'(a))^3\nonumber\\
&-&4 (a^2-1) \left[(5 a^2-1) H_n(a)+ (\alpha+n-1) (\alpha+n+1)a^4-4  n (n+2 \alpha)a^2\right](H_n'(a))^2\nonumber\\
&+&\Big\{32a (a^2-1) H_n^2(a)+ \left[8 a^5 (2 \alpha ^2+2 n^2+4 \alpha  n-1)-8 a^3 (4 n^2+8 \alpha  n-1)+32 a n (2 \alpha +n)\right]H_n(a)\nonumber\\
&-&16 a^3 (\alpha +n)^2 (a^2 n^2+2 \alpha  a^2 n+1)\Big\}H_n'(a)-16 (a^2-1) H_n^3(a)\nonumber\\
&-&\left[4 a^4 (4 \alpha ^2+4 n^2+8 \alpha  n-1)-4 a^2 (4 n^2+8 \alpha  n-1)+16 n (2 \alpha +n)\right]H_n^2(a)\nonumber\\
&+&16 a^2 (\alpha +n)^2(a^2 n^2+2 \alpha  a^2 n+1) H_n(a),\nonumber
\eea
\bea
D:
&=&8 (\alpha +n) \Big\{(a^2-1)^2 a^2 H_n''(a)-2a (a^2-1) \left[a^2 (2 n^2+4 \alpha  n+1)-2 H_n(a)\right] H_n'(a)-4 (a^2-1) H_n^2(a)\nonumber\\
&-&2 a^2  \left[(2 \alpha ^2-1) a^2+2 n^2+4 \alpha  n+1\right]H_n(a)+4 a^4 (\alpha +n)^2 (a^2 n^2+2 \alpha  a^2 n+1)\Big\}.\nonumber
\eea
Substituting (\ref{rna}) into (\ref{equ4}), we obtain a second order differential equation on $H_{n}(a)$,
\bea
&&\bigg\{a^2 (a^2-1)^4 (H_n''(a))^2-4 a^2 (a^2-1)^2  \Big[(a^3-a) H_n'(a)-(a^2-1) H_n(a)-2 a^2 (\alpha +n)^2\Big]H_n''(a)\nonumber\\
&+&4 a (a^2-1)^3 (H_n'(a))^3-4 (a^2-1)^2 \Big[a^4 (\alpha +n-1) (\alpha +n+1)-4 a^2 n (2 \alpha +n)\nonumber\\
&+&(5 a^2-1) H_n(a)\Big] (H_n'(a))^2+8 a (a^2-1) \Big[4 (a^2-1) H_n^2(a)\nonumber\\
&+&\left(a^4 (2 \alpha ^2+2 n^2+4 \alpha  n-1)+a^2(4 \alpha ^2+1) +4 n (2 \alpha +n)\right)H_n(a)\nonumber\\
&-&2 a^2 (\alpha +n)^2 (3 a^2 n^2+6 \alpha  a^2 n+a^2+1)\Big] H_n'(a)-16 (a^2-1)^2 H_n^3(a)\nonumber\\
&-&4 (a^2-1)\Big[a^4 (4 \alpha ^2+4 n^2+8 \alpha  n-1)+a^2 (8 \alpha ^2+4 n^2+8 \alpha  n+1)+4 n (2 \alpha +n)\Big]H_n^2(a)\nonumber\\
&-&16 a^2 (\alpha +n)^2  \Big[a^4 (2 \alpha ^2-n^2-2 \alpha  n-1)+3 a^2 n (2 \alpha +n)+1\Big]H_n(a)\nonumber\\
&+&32 a^6 (\alpha +n)^4 (a^2 n^2+2 \alpha  a^2 n+1)\bigg\}^2\nonumber\\
&=&64 (\alpha +n)^2 \Big[a^4 (\alpha +n)^2+(a^2-1) (H_n(a)-a H_n'(a))\Big]\bigg\{a^2 (a^2-1)^2 H_n''(a)\nonumber\\
&-&2 a (a^2-1) \Big[a^2 (2 n^2+4 \alpha  n+1)-2 H_n(a)\Big] H_n'(a)-4 (a^2-1) H_n^2(a)\nonumber\\
&-&2 a^2  \Big[(2 \alpha ^2-1) a^2+2 n (2 \alpha +n)+1\Big]H_n(a)+4 a^4 (\alpha +n)^2 (a^2 n^2+2 \alpha  a^2 n+1)\bigg\}^2.\label{hna}
\eea
\noindent{\bf Remark.}
Witte, Forrester and Cosgrove \cite{Witte} made use of the Fredholm theory of integral equations, following \cite{Tracy},
 to obtain a third order differential equation for a quantity related to the gap probability $\mathbb{P}(a,n)$, and in our paper this quantity
 is $\frac{H_n(a)}{2a}$. See Proposition 15 in \cite{Witte}.

Before we describe the choice of the double scaled variable appropriate to our problem. We give short description of the equilibrium density based on Dyson's Coulomb Fluid approach \cite{Dyson}

\begin{lemma}
The equilibrium density $\rho(x)$ of the eigenvalues (or particles) at the origin in the symmetric Jacobi unitary ensemble is $\frac{\sqrt{n(n+2\alpha)}}{\pi}$.
\end{lemma}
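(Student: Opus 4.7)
\medskip
\noindent\textbf{Proof proposal.}
The plan is to apply the standard Coulomb fluid variational principle, reduce the equilibrium problem for the even weight $w_0(x)=(1-x^2)^\alpha$ to a singular integral equation, guess the correct ansatz motivated by the symmetry and the expected form of the answer, verify it using two classical Hilbert-transform identities, and finally evaluate at $x=0$. Since $v_0(x)=-\alpha\ln(1-x^2)$ is even, the support of $\rho$ should be a symmetric interval $[-b,b]\subset[-1,1]$, and the Euler--Lagrange equation of the free energy
$$
F[\rho]=\int_{-b}^{b} v_{0}(x)\,\rho(x)\,dx-\iint \rho(x)\rho(y)\ln|x-y|\,dx\,dy,\qquad \int_{-b}^{b}\rho(x)\,dx=n,
$$
yields the singular equation
$$
\frac{2\alpha x}{1-x^{2}}\;=\;v_{0}'(x)\;=\;2\,\mathrm{P}\!\int_{-b}^{b}\frac{\rho(y)}{x-y}\,dy,\qquad x\in(-b,b).
$$

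First I would introduce the natural ansatz
$$
\rho(x)=\frac{C\,\sqrt{b^{2}-x^{2}}}{\pi(1-x^{2})},\qquad x\in[-b,b],
$$
which is dictated by the requirement that $\rho$ vanish like a square root at the soft edges $\pm b$ and absorb the singularity of $v_{0}$ at $\pm 1$. I would then plug this into the integral equation and reduce the resulting Hilbert transform via the partial fraction decomposition
$$
\frac{1}{(1-y^{2})(x-y)}=\frac{1}{2(1+x)(1+y)}-\frac{1}{2(1-x)(1-y)}+\frac{1}{(1-x^{2})(x-y)},
$$
so that the verification reduces to the two classical evaluations
$$
\int_{-b}^{b}\frac{\sqrt{b^{2}-y^{2}}}{1\pm y}\,dy=\pi\bigl(1-\sqrt{1-b^{2}}\bigr),\qquad
\mathrm{P}\!\int_{-b}^{b}\frac{\sqrt{b^{2}-y^{2}}}{x-y}\,dy=\pi x\quad(|x|<b).
$$
A short algebraic combination gives
$$
\mathrm{P}\!\int_{-b}^{b}\frac{\sqrt{b^{2}-y^{2}}}{(1-y^{2})(x-y)}\,dy=\frac{\pi x\sqrt{1-b^{2}}}{1-x^{2}},
$$
so the integral equation collapses to $C\pi\sqrt{1-b^{2}}=\alpha$.

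Combining this with the normalization, which using the first classical formula above is $C\pi\bigl(1-\sqrt{1-b^{2}}\bigr)=n$, produces two equations for the two unknowns $C$ and $b$. Solving yields
$$
\sqrt{1-b^{2}}=\frac{\alpha}{n+\alpha},\qquad C=\frac{n+\alpha}{\pi},\qquad b^{2}=\frac{n(n+2\alpha)}{(n+\alpha)^{2}},
$$
whence
$$
\rho(0)=\frac{(n+\alpha)\,b}{\pi}=\frac{\sqrt{n(n+2\alpha)}}{\pi}.
$$

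The only non-routine step is justifying the ansatz: strictly speaking, one must argue that this is the unique minimizer, which follows from strict convexity of $F[\rho]$ and the standard fact that a solution of the singular integral equation with the prescribed square-root edge behavior is necessarily the equilibrium measure. The main technical obstacle, however, is just the careful evaluation of the two Hilbert transforms above; these are well known, but the computation of $\int_{-b}^{b}\sqrt{b^{2}-y^{2}}/(1\pm y)\,dy$ via the substitution $y=b\sin\theta$ and the identity $1-\sin^{2}\theta=(1+\sin\theta)(1-\sin\theta)$ is the only place where a slightly delicate symmetrization argument (using $\theta\mapsto-\theta$) is needed to reduce the two integrals to the symmetric one $\int_{-\pi/2}^{\pi/2}\!2b^{2}\cos^{2}\theta/(1-b^{2}\sin^{2}\theta)\,d\theta=2\pi(1-\sqrt{1-b^{2}})$.
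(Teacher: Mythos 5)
Your proposal is correct and lands on exactly the density the paper derives, $\rho(x)=\frac{\alpha\sqrt{b^2-x^2}}{\pi\sqrt{1-b^2}\,(1-x^2)}$ with $b=\frac{\sqrt{n(n+2\alpha)}}{n+\alpha}$, but it gets there by a different middle step. The paper does not guess an ansatz: after differentiating the equilibrium condition it invokes the explicit Gakhov/Tricomi inversion formula for the finite Hilbert transform,
\begin{equation*}
\rho(x)=\frac{\sqrt{b^2-x^2}}{2\pi^2}\,\mathrm{P}\!\int_{-b}^{b}\frac{\mathrm{v}_{0}'(x)-\mathrm{v}_{0}'(y)}{x-y}\,\frac{dy}{\sqrt{b^2-y^2}},
\end{equation*}
and then fixes $b$ by normalization; you instead posit $\rho(x)=\frac{C\sqrt{b^2-x^2}}{\pi(1-x^2)}$ and verify it by forward Hilbert-transform evaluations via your partial-fraction decomposition. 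The two routes are essentially dual: yours needs the forward integrals $\int\frac{\sqrt{b^2-y^2}}{1\pm y}dy$ and $\mathrm{P}\!\int\frac{\sqrt{b^2-y^2}}{x-y}dy$ (which you state correctly), while the paper needs the corresponding integral against $\frac{dy}{\sqrt{b^2-y^2}}$; the computational burden is comparable. Your approach additionally owes a word on why the verified ansatz is the equilibrium density, which you acknowledge via convexity/uniqueness; the paper sidesteps this by citing the boundary condition $\rho(\pm b)=0$ as selecting the solution of the inversion formula. One bookkeeping slip: with your ansatz carrying $\pi$ in the denominator, the two conditions should read $C\sqrt{1-b^2}=\alpha$ and $C(1-\sqrt{1-b^2})=n$ (no extra $\pi$), giving $C=n+\alpha$; your displayed equations have a stray factor of $\pi$ that contradicts your own final line $\rho(0)=\frac{(n+\alpha)b}{\pi}$, which is the correct one. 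This does not affect the conclusion $\rho(0)=\frac{\sqrt{n(n+2\alpha)}}{\pi}$.
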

\begin{proof}
The density is characterized by the equilibrium equation \cite{Chen1997T},
\be\label{cf}
\mathrm{v}_{0}(x)-2\int_{-b}^{b}\ln|x-y|\rho(y)dy=A,
\ee
where $A$ is the Lagrange multiplier which fixes the constraint
\be\label{cf1}
\int_{-b}^{b}\rho(x)dx=n.
\ee
Taking a derivative with respect to $x$ from (\ref{cf}), we obtain a singular integral equation,
$$
\mathrm{v}_{0}'(x)-2P\int_{-b}^{b}\frac{\rho(y)}{x-y}dy=0.
$$
The solution subject to the boundary condition $\rho(-b)=\rho(b)=0$ is given by \cite{Gakhov}
$$
\rho(x)=\frac{\sqrt{b^2-x^2}}{2\pi^2}P\int_{-b}^{b}\frac{\mathrm{v}_{0}'(x)-\mathrm{v}_{0}'(y)}{x-y}\frac{dy}{\sqrt{b^2-y^2}}.
$$
Some simple computations show that
\be\label{rho}
\rho(x)=\frac{\alpha\sqrt{b^2-x^2}}{\pi\sqrt{1-b^2}(1-x^2)}.
\ee
Substituting (\ref{rho}) into (\ref{cf1}), we get
$$
b=\frac{\sqrt{n(n+2\alpha)}}{n+\alpha}.
$$
Hence,
$$
\rho(x)=\frac{\sqrt{n(n+2\alpha)-(n+\alpha)^2x^2}}{\pi(1-x^2)}.
$$
It follows that
$$
\rho(0)=\frac{\sqrt{n(n+2\alpha)}}{\pi}.
$$
\end{proof}
From Lemma 5.1, we see that the equilibrium density of the eigenvalues (or particles) at the origin is of order $n$ for large $n$, hence
for large $n$ and $a$ tends to $0$, we take the scaling variable to be $t:=c\:n\:a.$ Here $c$ is a constant to be suitably chosen later.
\\
Let
$$
\sigma_n(t):=-H_{n}\left(\frac{t}{cn}\right)
$$
\\
After this change of variables, (\ref{hna}) becomes,
\bea
&&\left[(ct\sigma_n''(t))^2+4c^2t(\sigma_n'(t))^3-(4c^2\sigma_n(t)-16t^2)(\sigma_n'(t))^2-32t\sigma_n(t)\sigma_n'(t)+16\sigma_n^2(t)\right]^2\nonumber\\
&+&\frac{64\alpha}{n}(\sigma_n(t)-t\sigma_n'(t))^2\left[(ct\sigma_n''(t))^2+4c^2t(\sigma_n'(t))^3-(4c^2\sigma_n(t)-16t^2)(\sigma_n'(t))^2-
32t\sigma_n(t)\sigma_n'(t)+16\sigma_n^2(t)\right]\nonumber\\
&+&O\left(\frac{1}{n^2}\right)=0.\nonumber
\eea
We have obtained the coefficients of the $O(n^{-2})$ term. However, since  as it is a rather large expression, we do not write it down here.
We observe first `portion' of the above equation has no explicit $n$ dependence, the $n$ dependence of the rest appear with coefficient $(1/n)$, $(1/n^2)$ and so on.

Now, letting $n\rightarrow\infty$, discarding the $(1/n)$ and $(1/n^2)$ and higher order terms, it is immediate that
$\sigma_n(t)$, replaced by $\sigma(t)$,  satisfies the following second order non-linear ODE.
$$
(t\sigma''(t))^2=-4t(\sigma'(t))^3+\frac{(4c^2 \sigma(t)-16t^2)(\sigma'(t))^2}{c^2}+\frac{32t\sigma(t)\sigma'(t)}{c^2}-\frac{16\sigma^2(t)}{c^2}.
$$
By choosing $c^2=-16$, namely $c=4\:i$, to fit into the original JMMS $\sigma$ form, we have
$$
(t\sigma''(t))^2=-4t(\sigma'(t))^3+(4\sigma(t)+t^2)(\sigma'(t))^2-2t\sigma(t)\sigma'(t)+\sigma^2(t).
$$
We summarize it in the following theorem.
\begin{theorem}
$\sigma(t)$ satisfies the Jimbo-Miwa-Okamoto $\sigma$ form of the Painlev\'{e} V equation \cite{Jimbo1981},
\bea
(t\sigma''(t))^2&=&\left[\sigma(t)-t\sigma '(t)+2(\sigma'(t))^2+(\nu_0+\nu_1+\nu_2+\nu_3)\sigma'(t)\right]^2\nonumber\\
&-&4(\nu_0+\sigma'(t))(\nu_1+\sigma'(t))(\nu_2+\sigma'(t))(\nu_3+\sigma'(t)),\nonumber
\eea
where
$$
\nu_0=\nu_1=\nu_2=\nu_3=0.
$$
\end{theorem}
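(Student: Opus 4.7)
The plan is to verify the final theorem by direct algebraic comparison, since the bulk of the analytical work (the scaling limit producing the second-order ODE for $\sigma(t)$) has already been carried out in the paragraphs preceding the theorem statement. All that remains is to show that the ODE
$$
(t\sigma''(t))^2=-4t(\sigma'(t))^3+(4\sigma(t)+t^2)(\sigma'(t))^2-2t\sigma(t)\sigma'(t)+\sigma^2(t),
$$
obtained by the double scaling $t=cna$ with $c=4i$ and letting $n\to\infty$ in \eqref{hna}, coincides with the Jimbo--Miwa--Okamoto $\sigma$-form of Painlev\'{e} V at the particular parameter values $\nu_0=\nu_1=\nu_2=\nu_3=0$.

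The first step is simply to substitute $\nu_0=\nu_1=\nu_2=\nu_3=0$ into the JMO equation. The sum $\nu_0+\nu_1+\nu_2+\nu_3$ vanishes, and the quartic product $4(\nu_0+\sigma')(\nu_1+\sigma')(\nu_2+\sigma')(\nu_3+\sigma')$ collapses to $4(\sigma'(t))^4$. The right-hand side therefore reduces to
$$
\bigl[\sigma(t)-t\sigma'(t)+2(\sigma'(t))^2\bigr]^2-4(\sigma'(t))^4.
$$
The second step is to expand the square, carefully collecting the six cross terms $\sigma^2$, $t^2(\sigma')^2$, $4(\sigma')^4$, $-2t\sigma\sigma'$, $4\sigma(\sigma')^2$, and $-4t(\sigma')^3$, and then subtract the $4(\sigma')^4$ piece. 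The two quartic terms in $\sigma'$ cancel exactly, leaving
$$
\sigma^2(t)-2t\sigma(t)\sigma'(t)+(t^2+4\sigma(t))(\sigma'(t))^2-4t(\sigma'(t))^3,
$$
which is precisely the right-hand side of the derived ODE.

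Thus the content of the theorem is a one-line identification once the JMO form is written out with zero parameters. There is no real obstacle at this stage: the analytical difficulty lay entirely in the preceding sections (the derivation of \eqref{hna}, the Coulomb fluid computation fixing the scale $c n a$, and the expansion in powers of $1/n$). The only subtlety worth a brief remark is the correct sign convention: we defined $\sigma_n(t)=-H_n(t/(cn))$ precisely so that after the change of variables the overall sign matches the JMO normalization; had we kept $\sigma_n=+H_n$, the scaling constant $c$ would have to be chosen differently to absorb a sign, which would produce the same $\sigma$-form modulo an overall $\sigma\mapsto-\sigma$. With this normalization in place, the verification is purely mechanical and concludes the proof.
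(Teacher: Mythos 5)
Your proposal is correct and follows essentially the same route as the paper: the theorem is just the identification of the scaled limiting ODE $(t\sigma'')^2=-4t(\sigma')^3+(4\sigma+t^2)(\sigma')^2-2t\sigma\sigma'+\sigma^2$ with the Jimbo--Miwa--Okamoto $\sigma$-form at $\nu_0=\nu_1=\nu_2=\nu_3=0$, and your expansion of $\left[\sigma-t\sigma'+2(\sigma')^2\right]^2-4(\sigma')^4$ verifies exactly that. Your remark about the sign convention $\sigma_n(t)=-H_n(t/(cn))$ and the choice $c=4i$ is also consistent with the paper's treatment.
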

The above is the JMMS equation governing the gap probability distribution of the Gaussian unitary ensemble \cite{Jimbo1980}. In \cite{Tracy}, a double scaling was carried out on the finite $n$ Gaussian Unitary Ensemble, (namely $w_0(x)=\mathrm{e}^{-x^2},$)  where $a:=t\:(2\sqrt{2n})^{-1}$, $n$ tends to $\infty$ and $t$ finite, one recovers the original JMMS $\sigma$ function. Here $2\sqrt{2n}$ is the density of eigenvalues of the GUE at the origin. In our problem the density of eigenvalues of the symmetric JUE, reads $\frac{\sqrt{n(n+2\alpha)}}{\pi}$
as demonstrated in the previous Lemma.

Our result is a demonstration of universality, namely,  that scaling at a small neighbourhood covering the
origin of the eigenvalue spectrum where the density is constant and large, the original JMMS $\sigma$ form is recovered.
\\
\noindent{\bf Remark.}
In JMMS, one also finds similar imaginary time transformation.
\section*{Acknowledgments}
The authors gratefully acknowledge the generous support of Macau Science and Technology Development Fund under the grant number FDCT 130/2014/A3, FDCT 023/2017/A1 and the University of Macau through MYRG 2014-00011-FST, MYRG 2014-00004-FST. We like to thank the referees for illuminating comments.

\end{document}